\newcommand{\ie}{{i.e.}}
\newcommand{\ZZ}{\mathbb{Z}} 
\newcommand{\RR}{\mathbb{R}} 
\newcommand*\circled[1]{\tikz[baseline=(char.base)]{
    \node[shape=circle,draw,inner sep=2pt] (char) {#1};}}
\newcommand{\und}{\underline}
\def\Prob{{\rm Prob}}
\newcommand{\e}{\mathrm{e}}
\newcommand{\x}{\mathrm{x}}
\let\oldnl\nl
\newcommand{\nonl}{\renewcommand{\nl}{\let\nl\oldnl}}
\def\TitleOfAlgo{\@ifnextchar({\@TitleOfAlgoAndComment}{\@TitleOfAlgoNoComment}}
\def\@TitleOfAlgoAndComment(#1)#2{\nonl\hspace*{-1.5em}#2 #1\;}
\def\@TitleOfAlgoNoComment#1{\nonl\hspace*{-1.5em}#1\;}
\newcommand{\later}[1]{}
\newcommand{\old}[1]{}
\title{A Stopping Game on Zero-Sum Sequences}
\titlerunning{A Stopping Game on Zero-Sum Sequences}
\author{Adrian Dumitrescu}
{Algoresearch L.L.C., Milwaukee, WI, USA}
{ad.dumitrescu@algoresearch.org}
{0000-0002-1118-0321}
{}
\author{Arsenii Sagdeev}
       {Karlsruhe Institute of Technology, Karlsruhe, Germany,\\
         Alfr\'ed R\'enyi Institute of Mathematics, Budapest, Hungary
       }
{sagdeevarsenii@gmail.com}
{}
{}
\authorrunning{Adrian Dumitrescu and Arsenii Sagdeev}
\keywords{online algorithm, stopping time, optimality, asymptotic performance, lattice path}
\begin{document}

\maketitle

\begin{abstract}
 We introduce and analyze a natural game formulated as follows.
 In this one-person game, the player is given a random permutation $A=(a_1,\dots, a_n)$ of a multiset $M$ of $n$ reals
 that sum up to $0$, where each of the $n!$ permutation sequences is equally likely.
The player only knows the value of $n$ beforehand.
The elements of the sequence are revealed one by one and the player can stop the game at any time.
 Once the process stops, say, after the $i$th element is revealed, the player collects the amount
 $\sum_{j=i+1}^{n} a_j$ as his/her payoff and the game is over (the payoff corresponds to the
 unrevealed part of the sequence). 

Three online algorithms are given for maximizing the expected payoff in the binary case when $M$ contains only $1$'s and $-1$'s.
{\sc Algorithm~1} is slightly suboptimal, but is easier to analyze. 
Moreover, it can also be used when $n$ is only known with some approximation.
{\sc Algorithm~2} is exactly optimal but not so easy to analyze on its own.
{\sc Algorithm~3} is the simplest of all three. 
It turns out that the expected payoffs of the player are $\Theta(\sqrt{n})$ for all three algorithms. 

In the end, we address the general problem and deal with an arbitrary zero-sum multiset,
for which we show that our {\sc Algorithm 3} returns a payoff proportional to $\sqrt{n}$,
which is worst case-optimal. 
\end{abstract}

\section{Introduction} \label{sec:intro}

\emph{Stopping rules} are of key importance in the study of random processes and/or
online algorithms. Here is an illustrative example concerning the behavior of the largest payoff expectation
in a game proposed by Leo Moser almost $70$ years ago~\cite{Mo56}.

A real number is drawn at random from $[0,1]$. We may either keep the number selected, or
reject it and draw again. We can then either keep the second number chosen or reject it, and draw again,
and so on. Suppose we have at most $n$ choices, and the payoff is the number we keep.
What stopping rule gives the largest payoff expectation $E_n$ and how can we estimate $E_n$?

Moser showed that, for any distribution, given at most $n$ choices,
one should stop after the first choice if and only if the outcome is at least $E_{n-1}$.
In particular, for the uniform distribution, $f_1(x)=1$ for $0 \leq x \leq 1$,
$E_n$ satisfies the  recurrence
$E_{n+1} = \frac12 (1 + E_n^2)$, with $E_0=0$, $E_1=1/2$. 
For illustration, we list some values in Table~\ref{tab:E_n}.
Observe that $\lim_{n \to \infty} E_n =1$. 
The case of the normal distribution was studied by Guttman~\cite{Gu60}.
Other examples of stopping rules, mostly from the area of selection, are mentioned in the ``Related work''
paragraph later in this section. See also~\cite[Chap.~13]{MU17}.

\begin{table}[ht]
\caption{$E_n$ in Moser's procedure for the uniform distribution (the first three digits).}
\begin{center}
\begin{tabular}{|c|c||c|c||c|c|}
\hline
$n$ & $E_n$ & $n$ & $E_n$ & $n$ & $E_n$ \\
\hline
$1$ & $.500$ & $5$ & $.775$ & $100$ & $.981$ \\
$2$ & $.625$ & $10$ & $.861$ &  $200$ & $.990$ \\
$3$ & $.695$ & $20$ & $.919$ & $300$ & $.993$ \\
\hline
\end{tabular}
\end{center}
\label{tab:E_n}
\end{table}

Here we consider the following one-person game. Let $M$ be a (multi)set of $n$ reals that sum up to $0$.
The value of $n$ is known to the player beforehand. The player is given a random permutation $A=(a_1,\dots,a_n)$ of $M$,
where each of the $n!$ permutation sequences is equally likely. The elements of the sequence are revealed one by one and the player
can stop the game at any step, \ie, after the $a_i$ is revealed,
 for any $i=1,2,\ldots,n$, or the stop is automatic at the end of the process.
 Once the process stops, say, after the $i$th element is revealed, the player collects the amount
 $\sum_{j=i+1}^{n} a_j$ as his/her payoff and the game is over. Note that the payoff corresponds to the
 unrevealed part of the sequence. 

 Observe that any rational strategy for the game achieves a nonnegative outcome,
 since the player can always reduce its losses by continuing to play until the end,
 where he/she could force a zero payoff.

 In the \textit{binary case}, i.e., if $n=2m$ and $M$ consists of $m$ copies of $-1$'s and $1$'s, respectively,
 we define the \emph{state} of the process as the pair $(i,j)$, where $i$ and $j$ denote
the numbers of $-1$ and $+1$, revealed so far, respectively.

\subparagraph{Example.} Let $A=(-1,1,1,1,-1,1,-1,-1)$, with $n=8$. 
If the player stops the game before the first element (or after the last one) is revealed, his/her payoff is zero.
If the player stops the game before the second element is revealed, his/her payoff is $1$ (the total
amount of the unrevealed subsequence). Stopping the game at any other step would be irrational --- and is
therefore omitted from discussion.

\subparagraph{Comment.} The dual version of the game in which the payoff is the sum of the revealed values
up to the stopping point admits a similar analysis.
That is,  if the player stops the game after the $i$th element of $A$ is revealed,
his/her payoff is $\sum_{j=1}^{i} a_j$ (rather than $\sum_{j=i+1}^{n} a_j$).

\subparagraph{Our results.}
We present three online algorithms for maximizing the expected payoff in the binary case.
{\sc Algorithm~1} (in Section~\ref{sec:simple}) is slightly suboptimal,
but is easier to analyze. Moreover, it can also be used when $n$ is only known
with some approximation (in Section~\ref{sec:remarks}).

{\sc Algorithm~2} (in Section~\ref{sec:optimal}) is optimal but not so easy to analyze
on its own. This algorithm requires quadratic preprocessing time but is otherwise very simple to use.

{\sc Algorithm~3} (in Sections~\ref{sec:middle-b} and~\ref{sec:middle-g})  is the simplest of all three.
Moreover, it is applicable to any zero-sum multiset.

We show that the three algorithms share the same asymptotic performance, 
$\Theta(\sqrt{n})$, which is worst-case optimal. Our results are summarized in the following.

\begin{theorem} \label{thm:main}
Let $M=\{a_1,a_2,\ldots,a_n\}$ be an arbitrary zero-sum multiset of reals. 
There is an online algorithm that reads a random permutation of $M$ and stops with an expected payoff
of $\Omega(\mu \sqrt{n})$, where $\mu =\sum_{i=1}^n |a_i|/n$ is the average absolute value
for the input sequence. This bound is worst-case optimal, \ie, equality is attained 
(apart from the hidden constant factor) for a uniform sequence, with $n/2$ values equal to 
$\mu$ and $-\mu$, respectively. 
\end{theorem}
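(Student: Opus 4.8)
The plan is to establish the two halves of the statement separately: the algorithmic lower bound $\Omega(\mu\sqrt n)$, realized by \textsc{Algorithm 3}, and the matching upper bound $O(\mu\sqrt n)$ on the uniform instance, which holds against \emph{every} online rule. If $\mu=0$ then all entries vanish and the claim is trivial, so assume $\mu>0$; rescaling the multiset by $1/\mu$ multiplies every payoff and the claimed bound by the same factor, so we may assume $\mu=1$, i.e.\ $\sum_i|a_i|=n$. Write $S_i=a_1+\cdots+a_i$ for the (random) partial sums over the first $i$ revealed entries, so $S_0=S_n=0$ and the payoff collected when one stops after step $i$ equals $P_i:=\sum_{j>i}a_j=-S_i$; recall from the introduction that every rational strategy yields a nonnegative outcome.

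\emph{Lower bound.} It suffices to exhibit one online rule with expected payoff $\Omega(\sqrt n)$, and the threshold rule ``stop at the first index $i$ with $-S_i\ge c\sqrt n$, otherwise stop at $i=n$'' (for a small absolute constant $c>0$) will do; this is what \textsc{Algorithm 3} amounts to on such inputs. On the event $\mathcal E:=\{\max_{0\le i\le n}(-S_i)\ge c\sqrt n\}$ this rule collects at least $c\sqrt n$, and it never collects a negative amount, so $\E[P_\tau]\ge c\sqrt n\cdot\Pr[\mathcal E]$, and the task reduces to bounding $\Pr[\mathcal E]$ below by a positive absolute constant. Reversing a uniformly random permutation is again uniformly random, and this reversal carries $(S_i)_{i=0}^n$ to $(-S_{n-i})_{i=0}^n$; hence $-\min_i S_i$ and $\max_i S_i$ are equal in distribution, and since $\max_i(-S_i)=-\min_i S_i$ always, so are $\max_i(-S_i)$ and $\max_i S_i$. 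Because $\max_i S_i-\min_i S_i\ge 2t$ forces $\max_i S_i\ge t$ or $-\min_i S_i\ge t$, a union bound over these two equiprobable events gives $\Pr[\mathcal E]\ge\tfrac12\Pr[\max_i S_i-\min_i S_i\ge 2c\sqrt n]\ge\tfrac12\Pr[\,|S_{\lfloor n/2\rfloor}|\ge 2c\sqrt n\,]$, the last step because the range of $(S_i)$ is at least $|S_{\lfloor n/2\rfloor}-S_0|$.

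\emph{Anti-concentration at the middle.} Everything now hinges on a lower-tail bound for $|S_{\lfloor n/2\rfloor}|$, the absolute value of the centred sum of a uniformly random half of the multiset. A routine sampling-without-replacement computation gives $\Var(S_{\lfloor n/2\rfloor})=\Theta(n\sigma^2)$ with $\sigma^2:=\tfrac1n\sum_i a_i^2\ge\mu^2=1$ by the Cauchy--Schwarz inequality, together with the fourth-moment estimate $\E[S_{\lfloor n/2\rfloor}^4]=O(n^2\sigma^4)$, where the diagonal contribution is absorbed using the elementary bound $\sum_i a_i^4\le(\sum_i a_i^2)^2$. Applying the Paley--Zygmund inequality to $S_{\lfloor n/2\rfloor}^2$ then yields $\Pr[\,|S_{\lfloor n/2\rfloor}|\ge c'\sqrt n\,\sigma\,]\ge\delta$ for absolute constants $c',\delta>0$, hence $\Pr[\,|S_{\lfloor n/2\rfloor}|\ge c'\sqrt n\,]\ge\delta$ because $\sigma\ge1$; taking $c:=c'/2$ closes the chain and proves $\E[P_\tau]=\Omega(\sqrt n)=\Omega(\mu\sqrt n)$ after undoing the rescaling. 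This is the only point where the zero-sum hypothesis is used beyond the trivial nonnegativity observation; it enters solely through $\sigma\ge\mu$, and the ``spiky'' multisets in which a single $|a_i|$ dominates cause no trouble since then $|S_{\lfloor n/2\rfloor}|$ is already of order $n\mu$ with constant probability. (If the player is told only $n$ and not the scale, replace the threshold $c\sqrt n$ by $c\sqrt n$ times the empirical mean of $|a_1|,\dots,|a_{\lfloor n/3\rfloor}|$, which concentrates around $\mu$.)

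\emph{Optimality.} For the uniform instance with $n/2$ entries equal to $\mu$ and $n/2$ equal to $-\mu$, rescaling to $\mu=1$ turns $(S_i)_{i=0}^n$ into exactly a simple symmetric $\pm1$ random-walk bridge of length $n$. Whatever index $\tau$ a (possibly randomized, adaptive) online rule selects, we have the deterministic domination $P_\tau=-S_\tau\le\max_{0\le i\le n}(-S_i)$, so $\E[P_\tau]\le\E[\max_{0\le i\le n}(-S_i)]=O(\sqrt n)$, the last equality being the classical fact---via the reflection principle or a local central limit estimate---that the maximum of an $n$-step bridge has expected order $\sqrt n$. Undoing the rescaling, every online algorithm collects $O(\mu\sqrt n)$ on this instance, so the lower bound is best possible in the worst case. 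I expect the fourth-moment / anti-concentration estimate for an arbitrary zero-sum multiset to be the main obstacle; the reversal-symmetry reduction and the random-walk-bridge upper bound are then routine once it is in hand.
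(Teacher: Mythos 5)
Your argument is correct in substance but follows a genuinely different route from the paper's. The paper also relies on \textsc{Algorithm 3} (stop at the midpoint if the running sum is favorable), but its general-case analysis is a smoothing argument: Lemma~\ref{lem:replace} shows that replacing two elements of $M$ by two copies of their average can only decrease the expected payoff $f(M)$ (a four-case pointwise convexity check on $g$), so repeated averaging reduces an arbitrary zero-sum multiset to the binary multiset with $n/2$ copies each of $\mu'$ and $-\mu'$, where $\mu'>\mu/2$, whose payoff is then computed \emph{exactly} via a Vandermonde-type identity (Lemma~\ref{lem:W3}), giving $\sqrt{n/8\pi}+O(n^{-1/2})$. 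You replace all of this with a second/fourth-moment anti-concentration bound for sampling without replacement plus Paley--Zygmund, using $\sigma\ge\mu$ and $\sum_i a_i^4\le(\sum_i a_i^2)^2$ to tame heavy-tailed inputs. Both are sound; the paper's approach yields sharp constants in the binary case and avoids moment computations, while yours is self-contained and actually stronger in one respect (it gives $\Omega(\sigma\sqrt n)$, which can far exceed $\mu\sqrt n$). The upper-bound halves essentially coincide: your $\E[\max_i(-S_i)]=\sum_{t\ge1}\Prob[\max_i(-S_i)\ge t]$ is exactly the paper's $\sum_{t=1}^m\Prob(E_t)=O(\sqrt m)$ from Section~\ref{sec:simple}.

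Two corrections are needed, however. First, the threshold rule ``stop at the first $i$ with $-S_i\ge c\sqrt n$'' is \emph{not} what \textsc{Algorithm 3} does --- that is essentially \textsc{Algorithm 1}; \textsc{Algorithm 3} stops at index $n/2$ if and only if the running sum is favorable, and otherwise never stops early. Second, and more substantively, your threshold rule needs to know the scale $\mu$, which in the paper's model the player does not have, and the parenthetical patch fails: for $M=\{K,-K,0,\dots,0\}$ the empirical mean of $|a_1|,\dots,|a_{\lfloor n/3\rfloor}|$ equals $0$ with constant probability and does not concentrate around $\mu=2K/n$. Both issues disappear if you aim your key estimate directly at the scale-free rule: the two halves of a random permutation are exchangeable and sum to $0$, so $S_{n/2}$ is symmetrically distributed and the payoff of \textsc{Algorithm 3} equals $\E[g(S_{n/2})]=\tfrac12\E\bigl[|S_{n/2}|\bigr]\ge\tfrac12 c'\delta\,\sigma\sqrt n=\Omega(\mu\sqrt n)$ by your own Paley--Zygmund bound; the reversal symmetry and the range-of-the-bridge detour then become unnecessary.
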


\subparagraph{Related work.}
The \emph{best choice} problem for partially ordered sets considers a selection process
in an online setting~\cite{GKMN08,Pr99}. 
Let $X$ be a finite set equipped with a strict partial order $\prec$, \ie, an asymmetric and transitive relation.
The objects (elements) of $X$ are shuffled  --- each of the $|X|!$ permutations is equally likely ---
and appear one by one before a selector $\S$, who must make a single online choice. 
The $t$th element $x_t$ appears at time $t=1,2\ldots,n$.
The selector knows the cardinality $n=|X|$ but is \emph{ignorant} of $\prec$. 
$\S$ has access to the comparisons between elements that appeared so far;
thus after time $t$, $\S$ only knows the relation $\{(i,j) \colon i,j \leq t, x_i \prec x_j \}$.

The selector's goal is to pick an element that is \emph{maximal} with respect to $\prec$. 
That is, his/her goal is to stop the process at some time $\tau$
(that depends on the permutation) so that the currently examined element $x_t$ is \emph{maximal}
with respect to $\prec$. Such a random variable $\tau$ is called a \emph{stopping time}. 
The question that arises is: (Q) Can the selector obtain a maximal element with probability at least
$\delta$, for some constant $\delta$? --- irrespective of $(X,\prec)$. 
Preater~\cite{Pr99} was the first to obtain such an algorithm: the selector can pick a maximal
element with probability at least $1/8$. The authors of~\cite{GKMN08} improved the analysis
of Preater's algorithm and showed that the success probability is at least $1/4$, which is tight
for that algorithm. 

The case when $\prec$ is a total (\ie, linear) order, and thus known to $\S$, is the classic
\emph{secretary problem}~\cite{Fe89,LM85}. In the language of online algorithms,
one can refer to the problem as the \emph{online maximum} problem: the elements in a random
sequence $A$ of $n$ (distinct) numbers from an unknown distribution are revealed one by one,
and the goal is to select the maximum by an irrevocable decision made immediately after examining 
the $i$th element for some $i=1,2,\ldots,n$. Each of the $n!$ permutations of the ranks is equally likely.
Note that inaction leads to the automatic, \ie, forced, selection of the last (the $n$th) element
in the sequence.

The optimal algorithm achieves this task with a probability of success bounded from below by a
constant, see~\cite[pp.~73--77]{Mo65} and~\cite[pp.~90--95]{LM85}. For large $n$, the optimal strategy
passes approximately a fraction of $1/e$ of the sequence and chooses the first element
that is larger every element of the initial prefix. Then the
probability of success is approximately $1/e=0.367 \ldots$.
Similar algorithms, whose probabilities of success are bounded from below by a constant, 
can be designed for finding the \emph{online minimum}, \emph{online minimum and maximum},
\emph{online smallest two and largest two}, and for other variants. 

Under the additional assumption that the random numbers are uniformly distributed between $0$ and $1$,
namely in Moser's problem described in the introduction, the probability of success of the optimal strategy
is about $0.580$ (for large $n$); see~\cite[pp.~77--79]{Mo65},~\cite{Fe89}.

The success of an algorithm for optimal selection can be also quantified by the rank of the selected element,
where, say, the best rank is $1$ and the worst rank is $n$. Sixty years ago the authors of~\cite{CMRS64}
obtained the elegant result that for the stopping rule which minimizes the expectation of the absolute
rank of the selected element, as $n \to \infty$, the expected rank tends to the value of the infinite product
$ \prod_{j=1}^\infty \left( \frac{j+2}{j} \right)^{\frac{1}{j+1}} \approx 3.8695$.

\section{Preliminaries} \label{sec:prelim}

A \emph{lattice path} in $\ZZ^2$ is a path that moves right or up by one unit at each step.
Consider the integer lattice points in the square $[0,m] \times [0,m]$ and the set $L_m$ of 
lattice paths connecting the lower left corner of the square $(0,0)$ with the upper
right corner of the square $(m,m)$.

In the binary case, the evolution of the game corresponds to a prefix of a path in $L_m$. 
The path moves a step to the right if a revealed element is $-1$, and moves a step up if it is $+1$.
If the player stops the game after the $i$th element is revealed ($i \geq 1$), the path length is $i$.
See Fig.~\ref{fig:paths}.

\begin{figure}[htbp]
\centering
\begin{tikzpicture}[scale=0.55]
\draw[step=1cm,gray,very thin] (-1.9,-1.9) grid (7.9,7.9);
\draw[thick,->] (0,0) -- (8.5,0);
\draw[thick,->] (0,0) -- (0,8.5);
\foreach \x in {0,1,2,3,4,5,6,7}
   \draw (\x cm,1pt) -- (\x cm,-1pt) node[anchor=north] {$\x$};
\foreach \y in {0,1,2,3,4,5,6,7}
\draw (1pt,\y cm) -- (-1pt,\y cm) node[anchor=east] {$\y$};
\draw[line width=2pt] (0,0) -- (0,1) -- (4,1) -- (4,5) -- (5,5) -- (5,6) -- (6,6);
\draw[blue] [line width=1pt] (0,0) -- (7,7);
\draw[red] [line width=1pt] (0,-2) -- (7,5);
\end{tikzpicture}
\hspace{10mm}
\begin{tikzpicture}[scale=0.55]
\draw[step=1cm,gray,very thin] (-1.9,-1.9) grid (7.9,7.9);
\draw[thick,->] (0,0) -- (8.5,0);
\draw[thick,->] (0,0) -- (0,8.5);
\foreach \x in {0,1,2,3,4,5,6,7}
   \draw (\x cm,1pt) -- (\x cm,-1pt) node[anchor=north] {$\x$};
\foreach \y in {0,1,2,3,4,5,6,7}
\draw (1pt,\y cm) -- (-1pt,\y cm) node[anchor=east] {$\y$};
\draw[line width=2pt] (0,0) -- (0,2) -- (3,2) -- (3,5) -- (6,5) -- (6,6);
\draw[blue] [line width=1pt] (0,0) -- (7,7);
\draw[red] [line width=1pt] (0,-2) -- (7,5);
\end{tikzpicture}
\caption{Left: a lattice path from $(x,y)=(0,0)$ to $(6,6)$ that reaches the red line $y=x-2$ (left), 
  and one that does not (right). Here $n=6$. The line $y=x$ is in blue.}
\label{fig:paths}
\end{figure}
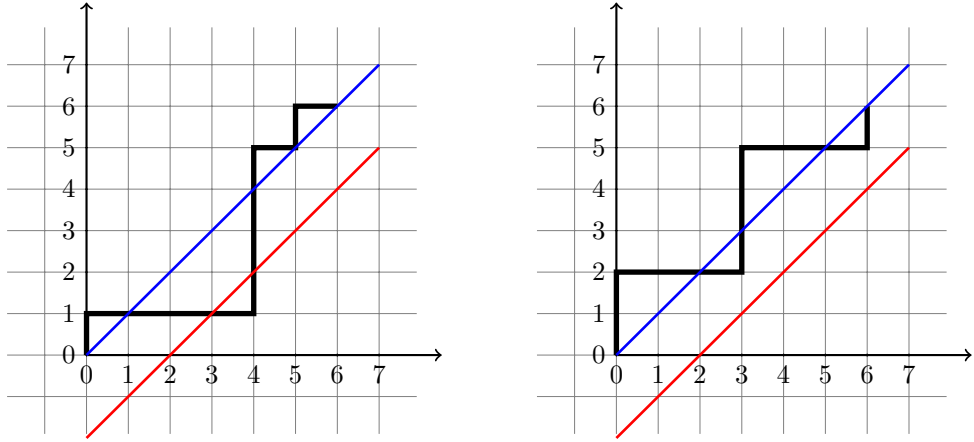

Recall the following result regarding lattice paths  --- obtained by the so-called
\emph{reflection principle}:

\begin{lemma} \label{lem:reach} {\rm \cite[Ch.~1.3]{Mo79}}
Let $t \geq 1$ be a positive integer.
The number of lattice paths from $(0,0)$ to $(m,m)$ that reach line $\ell: y=x-t$ is
${2m \choose m-t}$. 
\end{lemma}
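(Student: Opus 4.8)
The plan is to apply the reflection principle in the standard way, reducing the count of constrained paths to a count of unconstrained paths between a reflected pair of endpoints. First I would set up the bookkeeping: every path in $L_m$ consists of exactly $2m$ unit steps, $m$ to the right and $m$ up, so $|L_m| = \binom{2m}{m}$, and each is determined by the set of positions at which a right-step occurs. Writing the line as $\ell\colon x - y = t$, I would observe that both the start $(0,0)$ and the end $(m,m)$ satisfy $x - y = 0 < t$, so they lie strictly on the same side of $\ell$; consequently a path ``reaches'' $\ell$ precisely when at some step its current point satisfies $x - y = t$ (equivalently, the number of $-1$'s seen exceeds the number of $+1$'s by $t$).

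Next I would carry out the reflection. Reflection across $\ell$ sends a point $(a,b)$ to $(b+t,\,a-t)$; one checks that points with $a - b = t$ are fixed and that the segment joining $(a,b)$ to its image is perpendicular to $\ell$ with midpoint on $\ell$, so this is indeed the orthogonal reflection. In particular the start $(0,0)$ maps to $(t,-t)$, which lies on the opposite side of $\ell$ (there $x - y = 2t > t$). The bijection I would establish is this: given a path from $(0,0)$ to $(m,m)$ that touches $\ell$, let $T$ be the first lattice point at which it meets $\ell$, reflect the initial arc from $(0,0)$ to $T$ across $\ell$ (leaving $T$, which lies on $\ell$, fixed), and keep the arc from $T$ to $(m,m)$ unchanged. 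This produces a lattice path from $(t,-t)$ to $(m,m)$.

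Finally I would verify that this map is a bijection onto the set of all lattice paths from $(t,-t)$ to $(m,m)$, and then count them. For surjectivity and injectivity, the key point is that any path from $(t,-t)$ to $(m,m)$ must cross $\ell$ because its endpoints lie on opposite sides of $\ell$; reflecting the arc up to its first contact with $\ell$ recovers the unique preimage, so the correspondence is reversible. A path from $(t,-t)$ to $(m,m)$ uses $m - t$ right-steps and $m + t$ up-steps, for a total of $2m$ steps, so there are exactly $\binom{2m}{m-t}$ of them; with the convention $\binom{2m}{k} = 0$ for $k < 0$ this also correctly returns $0$ when $t > m$ and the line is unreachable. I expect the one genuine point requiring care is the bijection's reversibility --- specifically, confirming that ``first contact with $\ell$'' is well defined on both sides and that reflecting at that first-contact point is an involution on the relevant arcs; the remaining counting is immediate.
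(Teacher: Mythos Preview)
Your argument is correct and is exactly the reflection-principle proof the paper has in mind: the paper does not give its own proof of this lemma but simply cites it from Mohanty's book, noting that it is ``obtained by the so-called \emph{reflection principle}.'' Your write-up supplies precisely that standard argument, including the one point worth checking (that reflection across $x-y=t$ swaps right- and up-steps, so the first-contact point is preserved and the map is an honest bijection onto all lattice paths from $(t,-t)$ to $(m,m)$).
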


Let $E_t$ be the event that a random lattice path in $L_m$ reaches line $\ell: y=x-t$.
Recall that the total number of paths in $L_m$ is $|L_m|={2m \choose m}$.
Then by Lemma~\ref{lem:reach} we obtain:
\begin{equation} \label{eq:reach}
  \Prob\left(E_t\right) = \frac{{2m \choose m-t}}{{2m \choose m}}.
\end{equation}

\section{A simple threshold algorithm for the binary case} \label{sec:simple}

\subparagraph{Algorithm outline.} An integer threshold value $t =\Theta(\sqrt{n})$ is set; where $n=2m$. 
The algorithm waits until the lattice path reaches the line $\ell: y=x-t$, if at all, when it stops the game. 
If the lattice path describing the evolution of the game doesn't reach the line $\ell$, the algorithm continues
until the end (\ie, the entire input sequence is revealed). Next we proceed with the details.

\smallskip
\begin{algorithm}[H]
  \DontPrintSemicolon
  \TitleOfAlgo{\textsc{Algorithm 1}}
  \KwIn{the elements of a zero-sum sequence $A$ of length $2m$ over $\Sigma=\{-1,+1\}$ are revealed one by one}
  Set $t = \left \lfloor \sqrt{\frac{m+1}{2}} \right \rfloor$\;
  Follow the lattice path from $(0,0)$ to $(m,m)$ ($\rightarrow$ or $\uparrow$) as dictated
  by the entries of $A$ that have been revealed (step by step) by repeating Step 3\;
  If the path has reached the line $\ell: y=x-t$ \underline{stop}; else \underline{continue}\;
\end{algorithm}
\smallskip

\subparagraph{Algorithm analysis.} We first deduce a lower bound on the expected payoff.
Set $t = \left \lfloor \sqrt{\frac{m+1}{2}} \right \rfloor$. 
From~\eqref{eq:reach} we further obtain by 
applying the standard inequality $1-x \geq \e^{-2x}$ for $0 \leq x \leq 1/2$ in the last step:
\begin{align}
\Prob\left(E_t\right) &= \frac{{2m \choose m-t}}{{2m \choose m}} 
  = \frac{m(m-1) \cdots (m-t+1)}{(m+t)(m+t-1) \cdots (m+1)} \nonumber \\
  &= \prod_{i=1}^{t} \left( 1 - \frac{t}{m+i} \right)  \label{eq:prod1} \\
  &\geq \left(1 - \frac{t}{m+1} \right)^t \geq \exp \left(\frac{-2t^2}{m+1} \right) \label{eq:prod2} \\
  &\geq \exp(-1).  \nonumber
\end{align}

Let $W_1$ denote the expected payoff taken by the player. For the above value of $t$ we have:
\begin{equation}  \label{eq:prod3}
W_1 = t \cdot \Prob\left(E_t\right) \geq
\frac{1}{\e} \left \lfloor \sqrt{\frac{m+1}{2}} \right \rfloor = \Omega(\sqrt{m}).
\end{equation}

As for the upper bound, observe that the payoff is at least $t$ with probability only at most $\Prob\left(E_t\right)$
even if the player has complete information about a given random sequence, and chooses the best moment to stop the game.
Hence, the expected payoff of \textit{any} algorithm is at most the sum of these probabilities (see~\cite[Lemma~2.9]{MU17}):
\begin{equation*}
  \sum_{t=1}^{m}\Prob\left(E_t\right) = \sum_{t=1}^{m}\frac{{2m \choose m-t}}{{2m \choose m}}
  = \frac{2^{2m-1}}{{2m \choose m}}-\frac{1}{2} = \frac{\sqrt{\pi m }}{2}-\frac{1}{2}+O(m^{-1/2}),
\end{equation*}
where the last equality is a routine series expansion. Hence, $W_1 \!=\! \Theta(\sqrt{m})$, as claimed.

Executing one step of \textsc{Algorithm 1} takes constant time.

\subparagraph{Remarks.}
Our findings apply with some adjustments to the following variants:
\begin{itemize} \itemsep 3pt
  
\item The player is required to stop the game before the last element in the sequence is revealed. 
  Since the state of the process is known at any step, the player can always reduce its losses, if any
  are perceived at the current step, by continuing to play until exactly one element, the last one,
  remains, where he/she can stop the game and so the loss can be limited to $-1$. The analysis of this
  variant is similar.
  
\item The length $n$ of the input sequence is only known approximately, say, $N  \leq n \leq 2N$,
  for a given $N$.   For this variant, the player can use the {\sc Algorithm 1} with a similar 
  expected performance, where $n$ is replaced by $N$. 

\end{itemize}

\section{An optimal algorithm for the binary case} \label{sec:optimal}

The algorithm constructs two square matrices, whose content is described below.

Let $\left[ T[i,j] \right]_{0 \leq i,j \leq m}$ be a $(m+1)\times (m+1)$ matrix with nonnegative entries,
where $T[i,j]$ is the expected value of the player's payoff if he/she starts playing optimally
in state $(i,j)$, \ie, after $i$ elements of $-1$ and $j$ elements of $+1$ have been
revealed. As such, $T[0,0]$ is the expected \emph{value} of the game in an optimal play.

The optimal strategy is encoded in the \emph{stopping} square matrix $S$ as described below.
Let $\left[ S[i,j] \right]_{0 \leq i,j \leq m}$ be a $(m+1)\times (m+1)$ matrix with 0-1 entries,
where
\[S[i,j] = \begin{cases}
    1 & \text{if stop at this step --- in state } (i,j),\\
    0 &\text{if continue at this step --- in state } (i,j).
\end{cases}
\]

The matrix $T$ is constructed in a \emph{backwards} manner by using dynamic programming.
Indeed, the last row and column of $T$ are known beforehand: $T[i,m]=0$, for  $i=0,1,\ldots,m$,
and $T[m,j]=m-j$ for $j=0,1,\ldots,m$. Assuming that the entries
$T[i+1,j]$ and $T[i,j+1]$ are known, the entry $T[i,j]$ can be computed as follows:

First, the transition probabilities $p_0$ and $p_1$ to reach states $(i+1,j)$ and $(i,j+1)$, respectively,
by revealing a new element in state $(i,j)$ are computed:
\begin{equation*}  \label{eq:p0p1}
	p_0 = \frac{m-i}{2m-i-j}, ~~~
	p_1 = \frac{m-j}{2m-i-j}.
\end{equation*}

Then
\begin{equation} \label{eq:max}
T[i,j] = \max \left( i-j, p_0 \cdot T[i+1,j] + p_1 \cdot T[i,j+1] \right). 
\end{equation}

As such, the entries of $T$ can be computed row by row, from top to bottom,
and each row from right to left.

Consider an entry $T[i,j]$; If the maximum in~\eqref{eq:max} is given by the first term, $i-j$,
the corresponding entry in $S$ is set to $1$: $S[i,j]=1$. Indeed, this means that
the  expected payoff if the algorithm continues from state $(i,j)$ does not exceed the
payoff available in state $(i,j)$ if the algorithm stops. Otherwise, if the maximum in~\eqref{eq:max}
is given by the second term, set $S[i,j]=0$. Some of the stopping entries are unreachable
(for example, in the last column), but this is not an issue. 

In summary, the algorithm works as follows:

\smallskip
\begin{algorithm}[H]
  \DontPrintSemicolon
  \TitleOfAlgo{\textsc{Algorithm 2}}
  \KwIn{the elements of a zero-sum sequence $A$ of length $2m$ over $\Sigma=\{-1,+1\}$ are revealed one by one}
  Compute the matrices $T$ and $S$ as specified above\;
  Follow the path in $S$ starting from the entry $(0,0)$ by repeating Step 3\;
  If $S[i,j]=1$ (\ie, if $(i,j)$ is a stopping state), \underline{stop};
  else \underline{continue} to the next entry in $S$ ($S[i+1,j]$ or $S[i,j+1]$)
  as dictated by the most recently revealed element of~$A$\;
\end{algorithm}
\smallskip

\smallskip
Two examples with the optimal strategy as computed by \textsc{Algorithm 2} appear in
Fig.~\ref{fig:optimal}. 

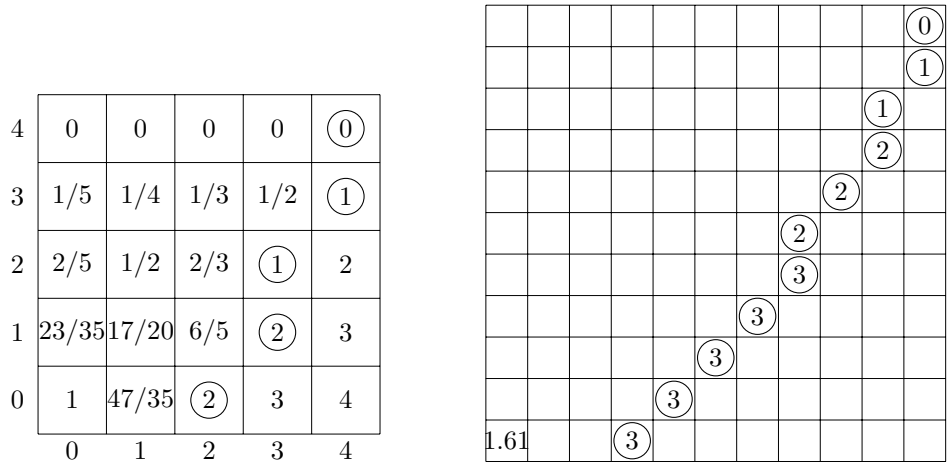
\begin{figure}[htbp]
 \centering 
\begin{tikzpicture} [scale=0.9]
  \draw[step=1cm,black,very thin] (0,0) grid (5,5);
\draw (0.5,-0.25) node{$0$};
\draw (1.5,-0.25) node{$1$};
\draw (2.5,-0.25) node{$2$};
\draw (3.5,-0.25) node{$3$};
\draw (4.5,-0.25) node{$4$};
\draw (-0.3,0.5) node{$0$};
\draw (-0.3,1.5) node{$1$};
\draw (-0.3,2.5) node{$2$};
\draw (-0.3,3.5) node{$3$};
\draw (-0.3,4.5) node{$4$};
\draw (0.5,0.5) node{$1$};
\draw (1.5,0.5) node{$47/35$};
\draw (2.5,0.5) node{$\circled{2}$};
\draw (3.5,0.5) node{$3$};
\draw (4.5,0.5) node{$4$};
\draw (0.5,1.5) node{$23/35$};
\draw (1.5,1.5) node{$17/20$};
\draw (2.5,1.5) node{$6/5$};
\draw (3.5,1.5) node{$\circled{2}$};
\draw (4.5,1.5) node{$3$};
\draw (0.5,2.5) node{$2/5$};
\draw (1.5,2.5) node{$1/2$};
\draw (2.5,2.5) node{$2/3$};
\draw (3.5,2.5) node{$\circled{1}$};
\draw (4.5,2.5) node{$2$};
\draw (0.5,3.5) node{$1/5$};
\draw (1.5,3.5) node{$1/4$};
\draw (2.5,3.5) node{$1/3$};
\draw (3.5,3.5) node{$1/2$};
\draw (4.5,3.5) node{$\circled{1}$};
\draw (0.5,4.5) node{$0$};
\draw (1.5,4.5) node{$0$};
\draw (2.5,4.5) node{$0$};
\draw (3.5,4.5) node{$0$};
\draw (4.5,4.5) node{$\circled{0}$};
\end{tikzpicture}
\hspace{10mm}
\begin{tikzpicture}[scale=0.55]
\draw[step=1cm,black,very thin] (0,0) grid (11,11);
\draw (0.5,0.5) node{$1.61$};
\draw (3.5,0.5) node{$\circled{3}$};
\draw (4.5,1.5) node{$\circled{3}$};
\draw (5.5,2.5) node{$\circled{3}$};
\draw (6.5,3.5) node{$\circled{3}$};
\draw (7.5,4.5) node{$\circled{3}$};
\draw (7.5,5.5) node{$\circled{2}$};
\draw (8.5,6.5) node{$\circled{2}$};
\draw (9.5,7.5) node{$\circled{2}$};
\draw (9.5,8.5) node{$\circled{1}$};
\draw (10.5,9.5) node{$\circled{1}$};
\draw (10.5,10.5) node{$\circled{0}$};
\end{tikzpicture}
 \caption{Left: an optimal strategy for $n=8$: $T[0,0]=1$,  $T[1,0]=47/35$, $T[2,0]=2$, $\ldots$, $T[4,4]=0$.  
   Right: an optimal strategy for $n=20$; to avoid the clutter, only the reachable stopping states are shown.
   Note that all stopping states are below the main diagonal (with one exception, the zero entry for the
   automatic stop in the end).}
	\label{fig:optimal}
\end{figure}

\subparagraph{Algorithm analysis.}
For any given state $(i,j)$ the algorithm has two options: stop or continue.
We may assume w.l.o.g. that $i \geq j$.
Note that payoff of value $i-j$ is immediately available for collection if the
algorithm stops in state $(i,j)$. 
Recall that each zero-sum sequence is equally likely and thus, for a given prefix, every
suffix of it is equally likely. 
Note that the second term in~\eqref{eq:max} is equal to the expected payoff
if the algorithm continues from state $(i,j)$.

If the maximum in~\eqref{eq:max} is given by the first term, $i-j$, this means that
the  expected payoff if the algorithm continues from state $(i,j)$ does not exceed the
payoff available in state $(i,j)$ if the algorithm stops. Recall that $S[i,j]=1$ in this case and thus
the algorithm correctly stops in this state. 
If the maximum in~\eqref{eq:max} is given by the second term, this means that
the  expected payoff if the algorithm continues from state $(i,j)$ exceeds those
available in state $(i,j)$ if the algorithm stops. Recall that $S[i,j]=0$ in this case and thus
the algorithm correctly continues from this state. 

Note that both $T$ and $S$ can be computed in quadratic time.  
Executing one step of \textsc{Algorithm~2} takes constant time. 

Let $W_2$ be the expected payoffs of the player under {\sc Algorithm 2}. Since {\sc Algorithm 2} is optimal,
it is stronger than {\sc Algorithm 1} from Section~\ref{sec:simple}, and thus $W_2 \geq W_1 =\Omega(\sqrt{n})$.
In addition, the general upper bound analysis in Section~\ref{sec:simple} yields that every algorithm,
including {\sc Algorithm 2}, has a payoff at most $O(\sqrt{n})$. Consequently, $W_2 = \Theta(\sqrt{n})$,
as claimed. 

\subparagraph{Example.}  With a shuffled standard deck of $52$ cards ($n=52$), the player
using \textsc{Algorithm~2} is guaranteed \$2.62 in expectation (which is optimal), if the payoff is,
say, the number of black cards minus the number of red cards in what is left unturned from the deck,
with \$1 per card. For comparison, the player using \textsc{Algorithm~1} is guaranteed
\$1.54 in expectation, by~\eqref{eq:prod3} (however, no optimization was attempted in its analysis).

\section{A `Stop in the Middle' algorithm for the binary case} \label{sec:middle-b}

In this section we offer yet another algorithm for the binary case, \textsc{Algorithm 3}.
Its expected payoff is $\Omega(\sqrt{n})$, asymptotically the same as for the previous two algorithms.
This algorithm is extremely simple as it does not perform any bookkeeping. However,
the main reason for presenting it is its applicability to arbitrary zero-sum sequences
as shown in Section~\ref{sec:middle-g}. 

For convenience, here we formulate the algorithm for the dual version of the game, 
in which,  if the player stops the game after the $i$th element of $A$ is revealed,
his/her payoff is $\sum_{j=1}^{i} a_j$.

Consider the function $g \colon \RR \to \RR$ defined as
 \begin{equation}
    g(x) =
    \begin{cases}
      x & \text{ if } x>0 \\
      0       & \text{otherwise.}
    \end{cases}
 \end{equation}
 The algorithm's payoff is $g(\sum_{j=1}^{n/2} a_j)$. Note that this only requires computing a running sum
 up to index $n/2$ and making a single sign determination.

\smallskip
\begin{algorithm}[H]
  \DontPrintSemicolon
  \TitleOfAlgo{\textsc{Algorithm 3}}
  \KwIn{the elements of a zero-sum sequence $A$ of length $n$ are revealed one by one}
  Compute the running sum $S=\sum_{j=1}^{n/2} a_j$ step by step\;
  If $S>0$ \underline{stop}; else \underline{continue} to the end \;
\end{algorithm}
\smallskip

Assume for convenience in the analysis that $n=4m$. Then it is easy to see that in the binary case ($\Sigma=\{-1,+1\}$),
the total number of zero-sum sequences is ${4m \choose 2m}$, whereas there are exactly ${2m \choose m-k}{2m \choose m+k}$
of them for which the payoff is $2k$, $0 \le k \le m$. Hence, the expected payoff $W_3(n)$ is
\begin{equation} \label{eq:alg3}
W_3(n) = {4m \choose 2m}^{-1} \sum_{k=1}^m 2k {2m \choose m-k}{2m \choose m+k}.
\end{equation}
A few values of $W_3(n)$ for some small even $n$ are listed in Table 2.

\begin{table}[ht]
\caption{A few values of $W_3(n)$.}
\begin{center}
  \begin{tabular}{|c|c|c|c|c|c|c|c|c|c|c|} 
    \hline
$n$ & 2 & 4 & 6 & 8 & 10 & 12 & 14 & 16 & 32 & 64\\
    \hline
    $W_3(n)$ & $\frac12$ & $\frac13$ & $\frac35$ & $\frac{18}{35}$ & $\frac57$ &
    $\frac{50}{77}$ & $\frac{350}{429}$ & $\frac{980}{1287}$ & $\approx 1.10$ & $\approx 1.57$ \\
    \hline
\end{tabular}
\end{center}
\label{tab:W_3}
\end{table}


\subparagraph{Algorithm analysis.} The following standard identity, which we prove in Appendix~\ref{sec:app},
leads to a closed form expression of the expected payoff.

\begin{lemma} \label{lem:W3} 
\begin{equation*}
	\sum_{k=1}^m 2k {2m \choose m-k}{2m \choose m+k} = 2m{2m-1 \choose m-1}^2.
\end{equation*}
\end{lemma}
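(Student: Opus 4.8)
The plan is to prove the identity
\[
\sum_{k=1}^m 2k \binom{2m}{m-k}\binom{2m}{m+k} = 2m \binom{2m-1}{m-1}^2
\]
by a generating-function / coefficient-extraction argument, which I find cleanest here. First I would rewrite the summand to make the weight $2k$ telescope. Using $\binom{2m}{m+k} = \binom{2m}{m-k}$ and the absorption-type identity $k\binom{2m}{m-k} = m\binom{2m}{m-k} - (m-k)\binom{2m}{m-k} = m\left(\binom{2m}{m-k} - \binom{2m-1}{m-k}\right)\cdot(\text{something})$ is a bit awkward, so instead I would use the cleaner relation
\[
k\binom{2m}{m-k} = \frac{m}{?}\Bigl(\binom{2m-1}{m-k-1} - \binom{2m-1}{m-k}\Bigr)\cdot(\ldots),
\]
i.e. exploit $\binom{2m}{m-k} - \binom{2m}{m-k-1}$-type differences. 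Concretely, $(m-k)\binom{2m}{m-k} = 2m\binom{2m-1}{m-k-1}$ and $(m+k)\binom{2m}{m-k} = 2m\binom{2m-1}{m-k}$ (standard "absorption"), and subtracting gives $2k\binom{2m}{m-k} = 2m\left(\binom{2m-1}{m-k} - \binom{2m-1}{m-k-1}\right)$. Substituting this into the sum converts it into $2m$ times a telescoping-in-disguise convolution sum.

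Next I would recognize the resulting sum as a coefficient in a product of binomial series. After the substitution the left side equals
\[
2m \sum_{k=1}^m \Bigl(\binom{2m-1}{m-k} - \binom{2m-1}{m-k-1}\Bigr)\binom{2m}{m+k},
\]
and I would interpret $\sum_{k} \binom{2m-1}{m-k}\binom{2m}{m+k}$ and $\sum_k \binom{2m-1}{m-k-1}\binom{2m}{m+k}$ each as Vandermonde-type convolutions, namely coefficients of $x^{2m}$ (or $x^{2m-1}$) in $(1+x)^{2m-1}(1+x)^{2m} = (1+x)^{4m-1}$. That turns both pieces into single binomial coefficients of the form $\binom{4m-1}{\cdot}$, and the difference of two such coefficients should, after shifting indices and using Pascal's rule in reverse, collapse to $\binom{2m-1}{m-1}^2$ — at which point one invokes the classical identity $\binom{2m-1}{m-1}^2$ already being a "central" convolution value, or simply verifies the final equality of binomial expressions directly. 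An alternative I would keep in reserve: a purely combinatorial double-counting of lattice paths, where $2m\binom{2m-1}{m-1}^2$ counts (path from $(0,0)$ to $(m,m)$) $\times$ (path from $(m,m)$ to $(2m,2m)$) $\times$ (a choice of one of $2m$ steps), matching the left side via a reflection/first-passage decomposition at the diagonal — this is likely the "helpful discussion with Fedor Petrov" route hinted at in the acknowledgements.

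The main obstacle I anticipate is bookkeeping with the index ranges: the sum on the left runs $k=1,\dots,m$, but the convolution/coefficient identities naturally want $k$ to range over all integers, so I must carefully account for the boundary terms (the $k=0$ term and the vanishing terms beyond $k=m$) to make sure the telescoping and the Vandermonde substitution are exact rather than approximate. Concretely, when I write $2k\binom{2m}{m-k} = 2m(\binom{2m-1}{m-k} - \binom{2m-1}{m-k-1})$ and sum, the $k=0$ term contributes $0$ on the left (good) but $2m(\binom{2m-1}{m} - \binom{2m-1}{m-1}) = 0$ on the right since those two are equal (also good, so I can freely extend to $k=0$), and for $k>m$ every binomial vanishes; handling these edge cases cleanly is the only real subtlety. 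Once the ranges are pinned down, the remaining computation is a short manipulation of $\binom{4m-1}{\cdot}$ terms plus one application of the standard central-coefficient identity, so I would present it compactly and defer the most mechanical line to the appendix as the authors indicate they do.
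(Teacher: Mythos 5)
Your opening move is fine: the absorption identities $(m+k)\binom{2m}{m-k}=2m\binom{2m-1}{m-k}$ and $(m-k)\binom{2m}{m-k}=2m\binom{2m-1}{m-k-1}$ do give $2k\binom{2m}{m-k}=2m\bigl(\binom{2m-1}{m-k}-\binom{2m-1}{m-k-1}\bigr)$, so the lemma reduces to showing $\sum_{k=1}^m\bigl(\binom{2m-1}{m-k}-\binom{2m-1}{m-k-1}\bigr)\binom{2m}{m+k}=\binom{2m-1}{m-1}^2$. The genuine gap is in how you propose to evaluate the two resulting sums $S_1=\sum_{k=1}^m\binom{2m-1}{m-k}\binom{2m}{m+k}$ and $S_2=\sum_{k=1}^m\binom{2m-1}{m-k-1}\binom{2m}{m+k}$: these are \emph{half-range} sums, not complete Vandermonde convolutions, because the summands are nonzero for negative $k$ as well, and those terms are neither zero nor negligible. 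Hence neither piece ``turns into a single binomial coefficient of the form $\binom{4m-1}{\cdot}$''; for instance, at $m=2$ one computes $S_1=13$ and $S_2=4$, and $13$ is not any $\binom{7}{j}$. The bookkeeping issue you flagged (the $k=0$ term and the vanishing terms beyond $k=m$) is not the real obstruction — the untouched negative-$k$ half of the convolution is — and no index shift or ``Pascal in reverse'' applied to two alleged $\binom{4m-1}{\cdot}$ values can recover the correct answer from that point.

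The argument can be repaired from exactly where you stand, but it needs a different closing idea. Expand the second factor by Pascal's rule, $\binom{2m}{m+k}=\binom{2m-1}{m+k}+\binom{2m-1}{m+k-1}$, and use the symmetries $\binom{2m-1}{m+k}=\binom{2m-1}{m-k-1}$ and $\binom{2m-1}{m+k-1}=\binom{2m-1}{m-k}$; writing $u_k=\binom{2m-1}{m-k}$, the summand becomes $(u_k-u_{k+1})(u_k+u_{k+1})=u_k^2-u_{k+1}^2$, and the sum over $k=1,\dots,m$ telescopes to $u_1^2-u_{m+1}^2=\binom{2m-1}{m-1}^2$, as required. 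The paper's own proof proceeds differently: it splits $2k=2m-2(m-k)$, uses symmetry to extend every sum to its full range, and only then applies Vandermonde to the resulting \emph{complete} convolutions (producing $\binom{4m}{2m}$, $\binom{4m-2}{2m-1}$, $\binom{4m-2}{2m-2}$), which is precisely the precondition your sketch overlooks. Your combinatorial alternative via lattice paths is only gestured at, not developed, so it cannot stand in for the missing step.
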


Now a routine series expansion yields
\[ W_3(n) = 2m{2m-1 \choose m-1}^2 {4m \choose 2m}^{-1} =
\sqrt{\frac{m}{2\pi}} + O(m^{-1/2}) = \sqrt{\frac{n}{8\pi}} + O(n^{-1/2}). \]
Therefore, $W_3(n) =\Theta(\sqrt{n})$, asymptotically the same as for the previous two algorithms.

\section{The `Stop in the Middle' algorithm for the general case} \label{sec:middle-g}

In this section we generalize our results to an \emph{arbitrary} zero-sum multiset $M=\{a_1,\ldots,a_n\}$
and thereby prove Theorem~\ref{thm:main}. 
Assume for convenience that the input size $n$ is even (the treatment for the case of odd $n$ is similar,
but somewhat more involved, and is therefore omitted). 
As in Section~\ref{sec:middle-b}, here we consider the dual version of the game, 
in which the payoff is the sum of the revealed elements. 

We show that the expected payoff achieved by \textsc{Algorithm 3} on a random permutation
of $M$ is $f(M) =\Omega(\mu \sqrt{n})$, where $\mu =\sum_{i=1}^n |a_i|/n$ is the
\emph{average absolute value} for the input multiset. Let $n_+$ and $n_-$ denote the number of positive and respectively
negative entries in $M$; where $n=n_+ + n_-$. Let $m=n/2$. Note that $f(\cdot)$ is a linear function,
\ie, $f(\lambda M) = \lambda f(M)$,
where $\lambda M = \lambda \{a_1,\ldots,a_n\} = \{\lambda a_1, \ldots, \lambda a_n\}$.

The key lemma below shows that the expected payoff can only decrease when replacing two distinct elements
by two copies of their average. If the two elements are both positive or both negative, this type of replacement
maintains $n_+$ and $n_-$.  If the two elements have different signs, the replacement can be used
to adjust the counts $n_+$ and $n_-$, as illustrated after the proof of the lemma. 

\begin{lemma} \label{lem:replace} 
  Let $a,b \in M$, $a<b$, be two distinct elements of $M$ and $M'= M \setminus \{a,b\}$. Then
  \[ f\left(M' \cup \{a,b\}\right) \geq f\left(M' \cup \left\{\frac{a+b}{2}, \frac{a+b}{2}\right\}\right). \]
\end{lemma}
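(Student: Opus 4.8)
The plan is to exploit the linearity of the payoff and the fact that the payoff depends only on the sign (and value) of a partial sum of the first $m=n/2$ elements. Write $c=\frac{a+b}{2}$, so that $a=c-\delta$ and $b=c+\delta$ with $\delta=\frac{b-a}{2}>0$. Couple the two games by using the \emph{same} random permutation $\pi$ of the $n$ index positions for both multisets: in the left-hand game, positions that in $\pi$ receive $a$ and $b$ get the values $c-\delta$ and $c+\delta$; in the right-hand game those two positions both get $c$; all other positions get the identical values $M'$ in both games. Let $S$ denote the sum of the first $m$ entries in the right-hand game (the one with two copies of $c$). Condition on the placement of the two special positions.

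There are three cases. First, if both special positions lie among the first $m$ (probability $p_{\text{both}}$), then the left partial sum equals $S$ exactly, because $(c-\delta)+(c+\delta)=2c$; so the two payoffs $g(\cdot)$ are identical, contributing equally. Second, if both special positions lie in the last $m$, the left partial sum again equals $S$ — identical contribution. Third, if exactly one special position is among the first $m$ (this is the only case that matters): by symmetry of $\pi$, conditioned on this event the value placed in the first half is equally likely to be $c-\delta$ or $c+\delta$, and in either sub-case the rest of the first half has the same distribution, call its sum $S'$ (so $S=S'+c$ in distribution on this event). The expected left payoff conditioned on this event is $\frac12\,\E[g(S'+c-\delta)] + \frac12\,\E[g(S'+c+\delta)]$, while the expected right payoff is $\E[g(S'+c)]$. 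Since $g$ is convex, $\frac12 g(x-\delta)+\frac12 g(x+\delta)\ge g(x)$ pointwise; taking expectations over $S'$ gives the inequality on this event. Summing the three cases weighted by their probabilities yields $f(M'\cup\{a,b\})\ge f(M'\cup\{c,c\})$.

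The main point requiring care is the symmetry claim in the third case: that, conditioned on exactly one special position falling in the first half, the distribution of the remaining first-half sum $S'$ is independent of \emph{which} of $a,b$ landed there. This holds because a uniformly random permutation is invariant under swapping the contents of the two special positions, which exchanges the roles of $a$ and $b$ without touching anything else; hence the conditional law of $S'$ is the same in both sub-cases. Once this is granted, everything else is the pointwise convexity of $g$ together with the elementary observation $g(x-\delta)+g(x+\delta)\ge 2g(x)$ (immediate from $g(x)=\max(x,0)$: check the three sign regimes of $x$ relative to $\pm\delta$), so no delicate estimates are needed. Note the argument uses only convexity of $g$ and linearity, so it would extend verbatim to any convex nonnegative payoff transform.
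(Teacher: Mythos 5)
Your proposal is correct and follows essentially the same route as the paper: condition on how the two special elements split across the two halves, note the equality in the ``both together'' cases, use the $1/2$--$1/2$ symmetry when they split, and conclude by the midpoint convexity of $g$ (which the paper verifies by an explicit four-interval case check rather than by citing convexity). The coupling and the swap-invariance justification you give are just a more formal phrasing of the paper's ``$a$ precedes $b$ with probability $1/2$'' step.
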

\begin{proof}
Let $F \subset M$ denote the first half of the random input permutation of $M$.
If both $a,b \in F$, or both $a,b \notin F$, then the corresponding payoffs satisfy the required inequality
(as an equality, since $a+b = \frac{a+b}{2} + \frac{a+b}{2}$). 
Suppose now that exactly one element of $\{a,b\}$ appears in $F$, and let $x$ denote the sum of the
other $m-1$ elements in $F$. Note that $a$ precedes $b$ with probability $1/2$ and vice-versa,
so it remains to check that
\begin{equation} \label{eq:inequality}
  \frac12 g(x+a) + \frac12g(x+b) \geq g\left(x + \frac{a+b}{2}\right).
\end{equation}
Note that $-b < - \frac{a+b}{2} < -a$ by the assumption. Consider the four possible intervals
where $x$ can lie in:
$x \leq -b$, $-b \leq x \leq -\frac{a+b}{2}$, $-\frac{a+b}{2} \leq x \leq -a$, $-a \leq x$.
In each of these cases, the verification of~\eqref{eq:inequality} is straightforward:

\begin{enumerate} \itemsep 3pt
\item $x \leq -b$: $\frac12 \cdot 0 +  \frac12 \cdot 0 = 0$.
\item $-b \leq x \leq -\frac{a+b}{2}$: $0 + \frac12 (x+b) \geq 0$.
\item $-\frac{a+b}{2} \leq x \leq -a$: $0 + \frac12 (x+b) \geq x + \frac{a+b}{2}$.
\item $-a \leq x$: $\frac12 (x+a) + \frac12 (x+b) = x + \frac{a+b}{2}$,
\end{enumerate}
and the verification is complete.
\end{proof}

If $M$ satisfies $n_+=n_-=n/2$, we invoke Lemma~\ref{lem:replace} with a
pair of distinct elements of $M$ that are either both positive or both negative,
while maintaining the zero-sum property and the average absolute value
$\mu$. It is not hard to see that $f(\cdot)$ is continuous as a
function of a given $n$-element multiset. Therefore, repeated
invocation of Lemma~\ref{lem:replace} allows us to average all the
elements of $M$ of same sign by compactness. In other words, we have
$f(M) \geq f(B)$, where $B$ is a `binary' multiset consisting of two elements
$\mu$ and $-\mu$, each with multiplicity $n/2$. Besides, recall from Section~\ref{sec:middle-b} that
$f(B) = \Theta(\mu\sqrt{n})$. 

If $M$ fails to satisfy $n_+=n_-=n/2$, we assume without loss of
generality that $n_-<n_+$ and proceed as follows. On each step, we
pick $a$ and $b$ as the most negative and the least positive elements
of $M$, respectively. We have that $|a|>b$ by the zero-sum property.
One application of Lemma~\ref{lem:replace} 
increases $n_-$ by one, and reduces $n_+$ by one while maintaining the
zero-sum property. Note, however, that this operation decreases the
average absolute value $\mu$. After $(n_+-n_-)/2$ such steps, we get a
multiset $M'$ that satisfies $n_+=n_-=n/2$. Since
\[ n_+-\frac{n_+-n_-}{2}  > \frac{n_+}{2} \]
of the largest positive elements of the
initial multiset $M$ remain in $M'$, its average absolute
value $\mu'$ satisfies $\mu'>\mu/2$ by the zero-sum property. Moreover,
the argument from the previous paragraph yields that $f(M') =
\Omega(\mu'\sqrt{n})$. Therefore, we conclude that $f(M) \ge f(M') =
\Omega(\mu\sqrt{n})$, as desired.

\subparagraph{Example.} Let $M=\{-5,-3,-3,+1,+1,+1,+2,+6\}$, with
$n=8$ and $\mu=11/4$. In the first step, we average $-5$ with $+1$ to
balance the number of positive and negative elements. Then we
`average' the elements of the same sign in arbitrary order. The
elements of the relevant action pairs are underlined.  
\begin{align*}
  f(\{\und{-5},-3,-3,\und{1},1,1,2,6\}) &\geq f(\{-3,-3,-2,-2,1,1,\und{2},\und{6}\}) \\
  &\geq f(\{-3,-3,-2,-2,1,1,4,4\}) \\
  &\vdots \\
  &\geq f(\{-2.5,-2.5,-2.5,-2.5,2.5,2.5,2.5,2.5\})\\
  &= 2.5 \cdot f(\{-1,-1,-1,-1,1,1,1,1\}) \\
  &= \frac{5 \cdot 18}{2 \cdot 35}= \frac97.
\end{align*}

\section{Concluding remarks} \label{sec:remarks}

Zero-sum sequences are relevant in the study of dynamic or random processes and/or 
optimization under uncertainty. We formulate two questions for further investigation.
The first one is more technical but its solution may offer additional insight on the problem.
The second one appears to be interesting and motivating for real applications, such as
simulation or trading.

\begin{enumerate}  \itemsep 3pt
 
\item Can the entries of $T$ describing the payoffs of an optimal strategy
  for zero-sum sequences over $\Sigma=\{-1,+1\}$ be derived in closed form? 

\item Can a better expected payoff be guaranteed in the general case, by taking advantage
  of the specificity of the input?

\end{enumerate}

\newpage
\appendix  

\section{Proof of Lemma~\ref{lem:W3}} \label{sec:app}

Our argument relies on a standard Vandermonde identity, see~\cite[Section~5.2]{GKP90},
\begin{equation*}
	\sum_{i=0}^{r}\binom{s}{i}\binom{t}{r-i}=\binom{s+t}{r},
\end{equation*}
and a handful of simple algebraic manipulations:
\begin{align*}
	\sum_{k=1}^m &2k {2m \choose m-k}{2m \choose m+k} = m\sum_{k=0}^m 2{2m \choose m-k}{2m \choose m+k} -2\sum_{k=0}^{m-1} (m-k) {2m \choose m-k}{2m \choose m+k} \\
	&= m{2m \choose m}^2+m\sum_{k=-m}^m {2m \choose m-k}{2m \choose m+k}-2m\sum_{k=0}^{m-1} 2{2m-1 \choose m-k-1}{2m \choose m+k} \\
	&= m{2m \choose m}^2+m\sum_{i=0}^{2m} {2m \choose i}{2m \choose 2m-i}-2m\sum_{k=0}^{m-1} 2{2m-1 \choose m-k-1}{2m-1 \choose m+k-1}\\
	&\hspace{58.5mm}-2m\sum_{k=0}^{m-1} 2{2m-1 \choose m-k-1}{2m-1 \choose m+k}\\
	&=m{2m \choose m}^2+m{4m \choose 2m}-2m\sum_{k=-m}^{m-1} {2m-1 \choose m-k-1}{2m-1 \choose m+k-1}\\
	&\hspace{10mm}-2m{2m-1 \choose m-1}^2-2m\sum_{k=-m+1}^{m-1}{2m-1 \choose m-k-1}{2m-1 \choose m+k}\\
	&=m{2m \choose m}^2+m{4m \choose 2m}-2m\sum_{i=0}^{2m-1} {2m-1 \choose i}{2m-1 \choose 2m-1-i}\\
	&\hspace{10mm}-2m{2m-1 \choose m-1}^2-2m\sum_{i=0}^{2m-2}{2m-1 \choose i}{2m-1 \choose 2m-2-i}\\
	&=m{2m \choose m}^2-2m{2m-1 \choose m-1}^2 +m{4m \choose 2m} -2m{4m-2 \choose 2m-1} -2m{4m-2 \choose 2m-2} \\
	&=m{2m \choose m}^2-2m{2m-1 \choose m-1}^2 = 2m{2m-1 \choose m-1}^2.
\end{align*}


\begin{thebibliography}{99}

\bibitem{CMRS64}
Yuan Shih Chow, Sigaiti Moriguti, Herbert Robbins, and Stephen M. Samuels,
Optimal selection based on relative rank (the “secretary problem”),
\emph{Israel Journal of Mathematics}
\textbf{2(2)} (1964), 81--90.

\bibitem{Fe89}
Thomas S. Ferguson,
Who solved the secretary problem?
\emph{Statistical Science}
\textbf{4(3)} (1989), 282–296.

\bibitem{GKMN08}
Nicholas Georgiou, Ma{\l}gorzata Kuchta, Micha{\l} Morayne, and Jaros{\l}aw Niemiec,
On a universal best choice algorithm for partially ordered sets,
\emph{Random Structures and Algorithms}
\textbf{32(3)} (2008), 263--273.

\bibitem{GKP90}
Ronald L. Graham, Donald E. Knuth, and Oren Patashnik,
\emph{Concrete Mathematics: A Foundation for Computer Science},
Addison-Wesley, Reading, MA, 1990.

\bibitem{Gu60}
Irwin Guttman,
On a problem of L. Moser,  
\emph{Canadian Mathematical Bulletin}
\textbf{3(1)} (1960), 35--39.

\bibitem{LM85}
  Richard J. Larsen and Morris L. Marx,   
 \emph{An Introduction to Probability and its Applications},
 Prentice--Hall,  Englewood Cliffs, New Jersey, 1985.

\bibitem{MU17}
Michael Mitzenmacher and Eli Upfal,
\emph{Probability and Computing: Randomized Algorithms and Probabilistic Analysis},
2nd~edition, Cambridge University Press, 2017.

\bibitem{Mo56}
  Leo Moser,
  On a problem of Cayley,
\emph{Scripta Mathematica}
\textbf{22} (1956), 289--292.

\bibitem{Mo65}
  Frederick Mosteller,
  \emph{Fifty Challenging Problems in Probability with Solutions},
Addison-Wesley, Reading, MA, 1965.

\bibitem{Mo79}
Sri Gopal Mohanty,
\emph{Lattice Path Counting and Applications},
Academic Press, New York, 1979.
 
\bibitem{Pr99}
James Preater,
The best-choice problem for partially ordered objects,
\emph{Operations Research Letters}
\textbf{25(4)} (1999), 187--190.

\end{thebibliography}
\end{document}